\begin{document}
\settopmatter{printacmref=false}
\renewcommand\footnotetextcopyrightpermission[1]{}
\captionsetup[figure]{name={Fig.},labelsep=period,singlelinecheck=off}

\title{Rethinking the Performance of ISAC System: From Efficiency and Utility Perspectives}

\author{Jiamo Jiang$^{\dag}$, Mingfeng Xu$^{\dag}$, Zhongyuan Zhao$^{*}$, Kaifeng Han$^{\dag}$, Yang Li$^{\dag}$, Ying Du$^{\dag}$, and Zhiqin Wang$^{\dag}$}
\affiliation{
\institution{${\dag}$ China Academy of Information and Communications Technology.}
\institution{${*}$ Beijing University of Posts and Telecommunications.}
\city{Beijing}
\country{China}}
\email{jiangjiamo@caict.ac.cn}

\begin{abstract}
Integrated sensing and communications (ISAC) is an essential technology for the 6G communication system, which enables the conventional wireless communication network capable of sensing targets around. The shared use of pilots is a promising strategy to achieve ISAC. It brings a trade-off between communication and sensing, which is still unclear under the imperfect channel estimation condition. To provide some insights, the trade-off between ergodic capacity with imperfect channel estimation and ergodic Cramer-Rao bound (CRB) of range sensing is investigated. Firstly, the closed-form expressions of ergodic capacity and ergodic range CRB are derived, which are associated with the number of pilots. Secondly, two novel metrics named efficiency and utility are firstly proposed to evaluate the joint performance of capacity and range sensing error. Specifically, efficiency is used to evaluate the achievable capacity per unit of the sensing error, and utility is designed to evaluate the utilization degree of ISAC. Moreover, an algorithm of pilot length optimization is designed to achieve the best efficiency. Finally, simulation results are given to verify the accuracy of analytical results, and provide some insights on designing the slot structure.
\end{abstract}

\maketitle

\section{Introduction}
As the raise of new intelligent services, such as extended reality (XR) and holographic awareness, it is challenging to guarantee the quality of service via the communication system that only have the ability of communication \cite{b10}. ISAC is emerged as a promising technology to solve this problem, which can provide sensing ability as a secondary function via sharing the wireless resources and hardware \cite{b11}.

Currently, communication-centric design is regarded as one of three key fields in the research on ISAC \cite{b8}. In particular, sensing is designed as an extensional function under the condition that the basic communication requirements should be guaranteed. In \cite{b3}, a pilots multiplexed based scheme has been proposed, where the pilots are used for estimating channels and sensing remote targets simultaneously. Then the trade-off between the error rate and range CRB has been studied. Moreover, a virtual pilots generated scheme has been proposed in \cite{b6} for enhancing the velocity sensing performance with a small reduction of data rate. In \cite{b9}, the possibility of employing multiple slots in OFDM system for target detection has been investigated. Furthermore, the impact of imperfect self-interference cancellation at receiver has been considered in \cite{b7}, and a synchronized scheme between received signal and self-interference has been designed to employ multiple slots to detect multiple targets.

\begin{figure}
	\centering
	\includegraphics[width=0.8\linewidth]{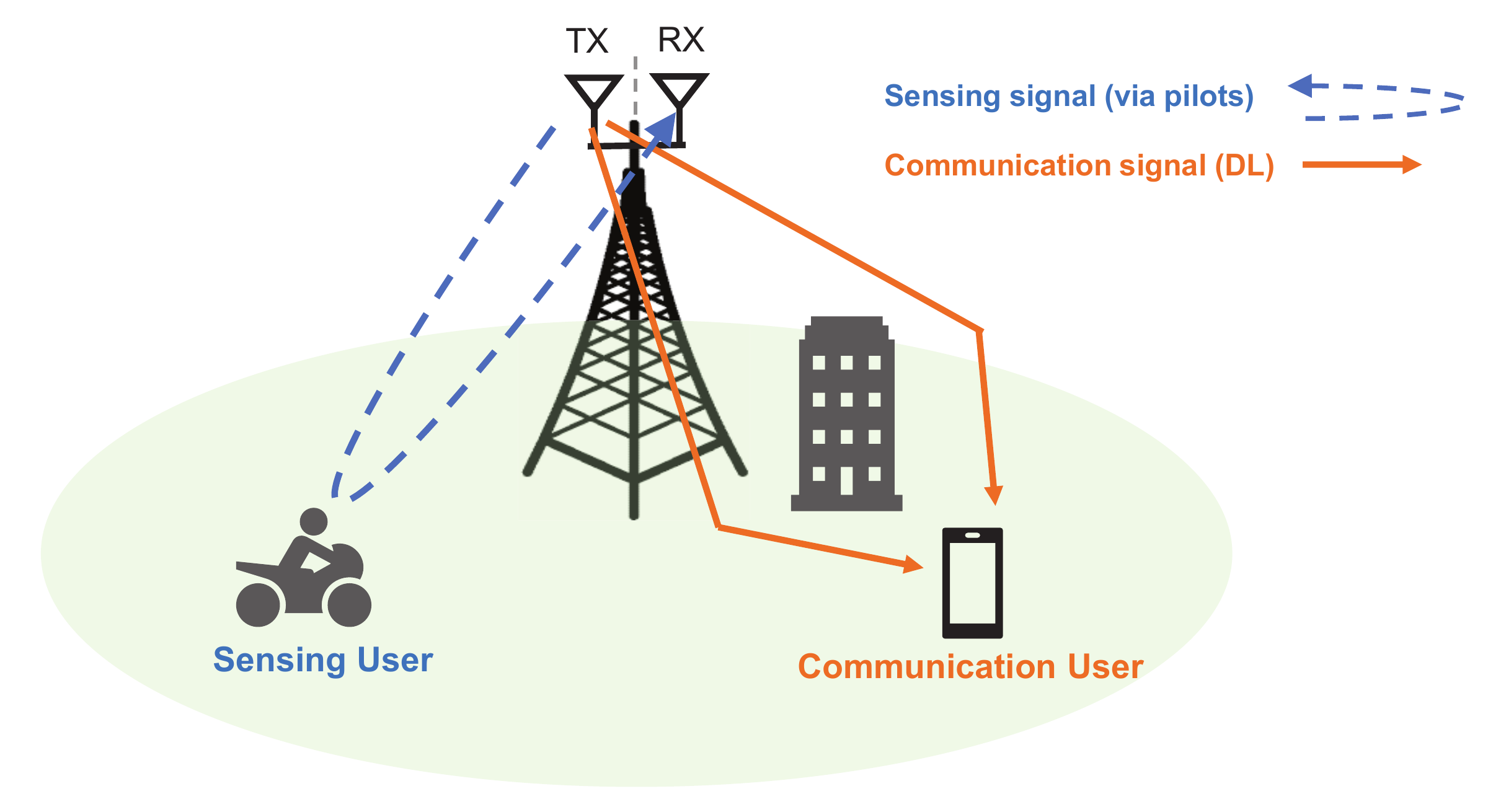}
	\caption{System model of ISAC network: An example.}
\end{figure}

All above works only consider the ideal channel conditions, which is not in line with the reality. Therefore, we study the trade-off between communications and sensing with consideration of imperfect channel estimation in this paper. Moreover, consider that the separation of communication metrics and sensing metrics can not characterize the efficiency and utility of ISAC, two novel metrics are firstly proposed in this paper. The main contributions can be summarized as follows:
\begin{itemize}
\item First, the performance of ergodic capacity with imperfect channel estimation and ergodic range CRB are studied, and their closed-form expressions related to the number of pilots are provided. Followed by establishing an exact trade-off between capacity and range CRB.
\item Second, two novel metrics aiming to evaluate the joint performance of capacity and range sensing error are proposed, named efficiency and utility, respectively. In particular, efficiency is to evaluate achievable capacity of unit sensing error, and utility is to evaluate the reachability of maximum capacity and minimum sensing error. Moreover, the best efficiency and utility can be achieved by optimizing the number of pilots.
\item Finally, simulation results are provided to show how the number of pilots affects the joint performance of communications and sensing, which can provide some insights on designing the slot structure.
\end{itemize}

\section{System Model.}
Consider a downlink transmission scenario where the base station (BS) equiped with a pair of antennas serves two typical users, as shown in Fig. 1. Typically, User 1 is a communication target equiped with a single antenna, while User 2 is a device-free sensing target. The BS can determine its distance to User 2 via receiving the echo of a series of pilot signals without declining the communication service quality for User 1. To support the measurement capable of high precision, a line-of-sight (LOS) path is assumed to be existence in the signal propogation between the BS and User 2. Moreover, the non-line-of-sight (NLOS) environment is considered between the BS and User 1. Then, the distribution of wireless channel between the BS and User 1 is modeled as Rayleigh distribution, while the wireless channel between the BS and User 2 follows Rice distribution.

As shown in Fig. 2, assuming that a transmission slot contains $L$  symbols. In particular, two phases are existed in each transmission slot, which are named the pilot transmission phase and the data transmission phase, respectively. Without loss of generality, the pilot signals are assumed to be transmitted at the first $L_p$ symbols, and the rest $L-L_p$ symbols belong to the data transmission phase. Then, the received signals at User 1 can be expressed as:

\begin{subequations}\label{eqn: received_ue1}
\begin{align}
& y_{p}(t) = h_{p}(t) \sqrt{\rho_p} x_{p}(t) + w_p(t),\ 1 \leq t \leq L_p,\\
& y_{d}(t) = h_{d}(t) \sqrt{\rho_d} x_{d}(t) + w_d(t),\ L_p+1 \leq t \leq L,
\label{eqn: received_ue1_hd}
\end{align}
\end{subequations}
where $y_{p}$ and $y_{d}$ are received pilot and data signals at User 1, respectively, $x_{p}$ and $x_{d}$ are the transmitted pilot and data symbols, respectively, $h_{p}$ and $h_{d}$ represent the Rayleigh fading coefficient of the link between the BS and User 1 in the pilot and data transmission phase, respectively, $\rho_p$ and $\rho_d$ are the transmitted power of pilot and data symbols, respectively, and $w_p$ and $w_d$ are the Gaussian noise variable with variance $\sigma_p^2$ and $\sigma_d^2$, respectively.

\begin{figure}
	\centering
	\includegraphics[width=0.8\linewidth]{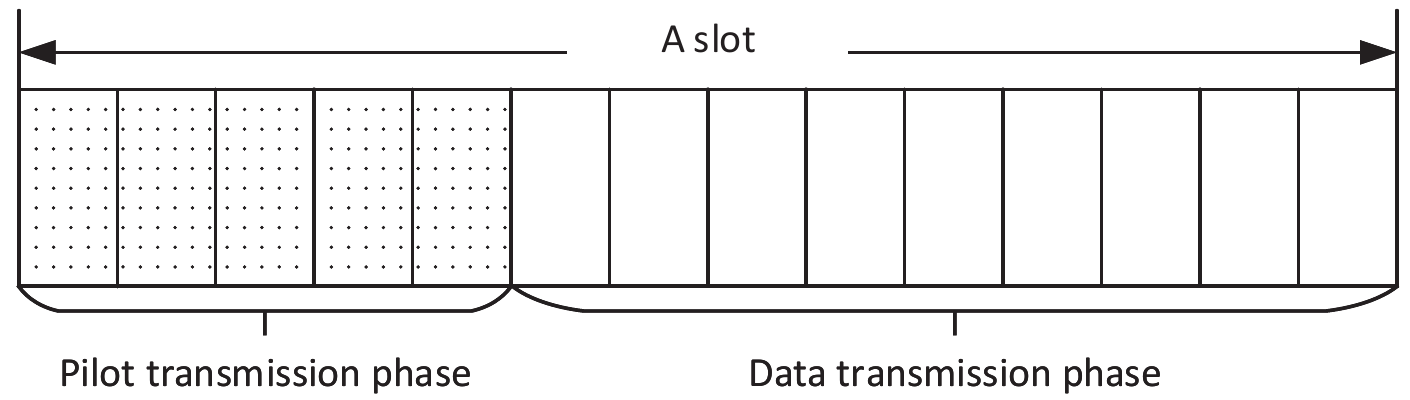}
	\caption{The structure of a slot.}
\end{figure}

Consider that only pilot signals are reused as the probe signal for sensing User 2. Besides, assuming that the transmitted signals can be cancelled successfully by the receiver at source such that the self-interfence is ignored. Then, the echo that reflected by User 2 received at the BS can be expressed as:
\begin{equation}\label{eqn:received_bs}
y_{s}(t) = \sqrt{\rho_{p,r} s_{rcs}} h_s(t)  x_p\bigg(t-\frac{2d}{c}\bigg) e^{\frac{j 4\pi vt}{\lambda}} + w_s(t),
\end{equation}
where $\rho_{p,r}=\rho_p / (4\pi d^2)$ denotes the received echo strength at BS, $h_s$ denotes the Rice fading coefficient of the link between the BS and User 2, $s_{rcs}$ is the radar cross section (RCS) of User 2, $d$ is the range between the BS and User 2, $c$ is the light of speed, $v$ is the speed of User 2 moving towards the BS, $\lambda$ is the wavelength of of the carrier, and $w_s$ is the additive white Gaussian noise with variance $\sigma_s^2$.

\section{Communication Performance: Ergodic Capacity with Imperfect Channel Estimation.} \label{sec:capacity}
In this section, we investigate the impact of imperfect channel estimation on the performance of ergodic capacity, and a closed-form expression of ergodic capacity associated with the number of pilots is provided. Firstly, with the estimated results of pilot channels, the data channels are estimated by employing wiener filtering interpolation algorithm. Then the ergodic capacity is derived with data channel estimation error, which establishes a relationship between the capacity and the number of pilots.

\subsection{Channel Estimation via Pilots.}
At first, the pilot channels need to be determined, which can be solved quickly by using the least-square estimation method. Thus, the expected error of estimation for pilot channels can be written as $e_p = \sigma_p^2 / \rho_p$. Next, to achieve the minimum mean square error (MMSE) criteria, the wiener filtering interpolation algorithm \cite{b1} is employed. Then the error of estimation can be given by the following lemma.

\begin{lemma}
The expected error of estimation for data channels can be expressed as:
\begin{equation}\label{eqn:ed_exp}
e_d = \sigma_1^2 - \frac{\sigma_1^2}{1+\frac{1}{ \sigma_1^2 \gamma_p L_p}},
\end{equation}
where $\sigma_1^2$ is the variance of the Rayleigh fading channel, $\gamma_p=\rho_p/\sigma_p^2$ denotes the signal-noise-ratio (SNR) of pilot channels.
\end{lemma}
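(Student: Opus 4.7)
The plan is to build the Wiener interpolator on top of the LS pilot estimates and compute its mean-square error in closed form. First, I would model the LS pilot estimate as $\hat{h}_p(t) = h_p(t) + \tilde{w}(t)$ for $1 \le t \le L_p$, where $\tilde{w}(t)$ is zero-mean with variance $e_p = \sigma_p^2/\rho_p = 1/\gamma_p$ (as the paper already states), and collect these into a vector $\hat{\mathbf{h}}_p \in \mathbb{C}^{L_p}$. Under the block-fading assumption implicit in the paper's single-slot formulation, the pilot and data channels collapse to a single zero-mean Rayleigh coefficient $h$ with variance $\sigma_1^2$, so the correlation quantities reduce to $\mathbf{R}_{\hat{\mathbf{h}}_p\hat{\mathbf{h}}_p} = \sigma_1^2 \mathbf{1}\mathbf{1}^T + e_p \mathbf{I}$ and $\mathbf{r}_{h_d\hat{\mathbf{h}}_p} = \sigma_1^2 \mathbf{1}$.

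Next, I would invoke the orthogonality principle for MMSE estimation: the Wiener coefficients are $\mathbf{w}^\star = \mathbf{R}^{-1}\mathbf{r}$ and the resulting error variance is $e_d = \sigma_1^2 - \mathbf{r}^T \mathbf{R}^{-1}\mathbf{r}$. Because $\mathbf{R}$ is a rank-one update of a scaled identity, the Sherman--Morrison identity gives
\begin{equation*}
\mathbf{R}^{-1} = \frac{1}{e_p}\left(\mathbf{I} - \frac{\sigma_1^2\,\mathbf{1}\mathbf{1}^T}{e_p + L_p\sigma_1^2}\right),
\end{equation*}
and using $\mathbf{1}^T\mathbf{1}=L_p$ yields $\mathbf{r}^T \mathbf{R}^{-1}\mathbf{r} = L_p\sigma_1^4/(e_p + L_p\sigma_1^2)$. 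Substituting back produces the compact form $e_d = \sigma_1^2 e_p/(e_p + L_p\sigma_1^2)$, equivalently $e_d = \bigl(1/\sigma_1^2 + L_p/e_p\bigr)^{-1}$.

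Finally, I would translate this into the notation of the lemma. Plugging $e_p = 1/\gamma_p$ into $\sigma_1^2 - L_p\sigma_1^4/(e_p + L_p\sigma_1^2)$ and dividing the numerator and denominator of the subtracted term by $\sigma_1^2 \gamma_p L_p$ yields $\sigma_1^2 - \sigma_1^2/(1 + 1/(\sigma_1^2 \gamma_p L_p))$, matching \eqref{eqn:ed_exp} exactly. The only real subtlety, rather than a true obstacle, is justifying the collapse from a general Wiener filter (as cited in \cite{b1}) to the scalar case above: within one slot the temporal correlation degenerates to the all-ones configuration, so the filter reduces to a scaled pilot average and the well-known additive-Fisher-information form of the MMSE applies. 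Once that reduction is in place the remaining work is pure algebra.
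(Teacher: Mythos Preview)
Your derivation is correct: the MMSE/Wiener filter applied to the $L_p$ LS pilot observations under the block-fading reduction yields exactly $e_d=\sigma_1^2/(1+\sigma_1^2\gamma_p L_p)$, which is algebraically the expression in the lemma. The paper actually omits its own proof of this lemma for space reasons, so there is nothing to compare against beyond noting that your Wiener-interpolation route is precisely the method the text announces just before the lemma (citing \cite{b1}).
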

\begin{proof}
Due to the limitation of paper space, the detailed proof of lemma 3.1 is omitted.
\end{proof}

As shown in Lemma 3.1, the expected error $e_d$ is a decreasing function with respect to $L_p$, which means that the estimation performance will be better as the number of pilots increases. $e_d$ will be $0$ when $L_p$ approaches to infinity.

\subsection{Ergodic Capacity with Imperfect Channel Estimation.}
Based on the expression of $e_d$ given by Lemma 3.1, the ergodic capacity for User 1 can be investigated. By using the reparametrization trick \cite{b12}, the Rayleigh fading coefficient of data channels $h_d$ can be rewritten as:
\begin{equation}\label{eqn:hd_re}
h_d = \hat{h}_d + w_{d,wf},
\end{equation}
where $\hat{h}_d$ denotes the estimated result of data channels, and $w_{d,wf}$ is the Gaussian variable with mean 0 and variance $e_d$. Then, by substituting \eqref{eqn:hd_re} into \eqref{eqn: received_ue1_hd}, the received signals at User 1 in the data transmission phase can be rewritten as:
\begin{equation}
\begin{aligned}
y_d(t) &= (\hat{h}_d (t) + w_{d,wf}(t)) \sqrt{\rho_d} x_d(t) + w_d(t)\\
&= \hat{h}_d (t) \sqrt{\rho_d} x_d(t) + w_{d,wf}(t) \sqrt{\rho_d} x_d(t) + w_d(t).
\end{aligned}
\end{equation}

As introduced in \cite{b2}, the ergodic capacity with channel estimation error for User 1 can be determined by
\begin{equation}\label{eqn:c}
C = \mathbb{E} \bigg\{ B(L-L_p) \log_2 \bigg( 1+\frac{|\hat{h}_d|^2 \rho_d}{\sigma_d^2 + |w_{d,wf}|^2 \rho_d} \bigg) \bigg\}.
\end{equation}
where $B$ is the bandwidth, and $|\cdot|$ denotes the norm operation.

Based on \eqref{eqn:c}, an closed-form expression of $C$ can be given in the following theorem.
\begin{theorem}
The ergodic capacity $C$ with imperfect channel estimation can be expressed as:
\begin{equation}\label{eqn:ergodic_capacity_exp}
\begin{aligned}
C = \frac{B(L-L_p)}{ \big(1-\frac{1}{1+\sigma_1^2 \gamma_p L_p} \big) \ln2} \bigg[ & e^{\frac{1 + \gamma_p L_p}{\gamma_d}}{\rm{Ei}} \bigg(-\frac{1 + \gamma_p L_p}{\gamma_d} \bigg) -\\
& e^{\frac{1}{\gamma_d \sigma_1^2}} {\rm{Ei}} \bigg(-\frac{1}{\gamma_d \sigma_1^2} \bigg) \bigg],
\end{aligned}
\end{equation}
where $\gamma_d=\rho_d/\sigma_d^2$ denotes the SNR of data channels, and ${\rm{Ei}}(x) = \int_{-\infty}^{x} e^t/t {\rm{d}} t$ is the exponential integral function.
\end{theorem}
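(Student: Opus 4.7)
The plan is to evaluate the expectation in \eqref{eqn:c} as a double expectation over the joint distribution of $|\hat h_d|^2$ and $|w_{d,wf}|^2$. Because Wiener filtering is a linear MMSE estimator, the orthogonality principle renders $\hat h_d$ and $w_{d,wf}$ independent zero-mean complex Gaussians, so $X := |\hat h_d|^2$ and $Y := |w_{d,wf}|^2$ are independent exponential random variables whose means are pinned down by Lemma~3.1. The first step is to apply the elementary identity
\[
\log_2\!\Bigl(1 + \tfrac{X \rho_d}{\sigma_d^2 + Y \rho_d}\Bigr) = \log_2\bigl(1 + \gamma_d(X+Y)\bigr) - \log_2\bigl(1 + \gamma_d Y\bigr),
\]
which reduces \eqref{eqn:c} to two expectations that each depend on only one marginal distribution.

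For $\mathbb{E}[\ln(1 + \gamma_d Y)]$ I would use the standard identity $\int_0^\infty \ln(1+\alpha z)\, e^{-z/\mu}\, dz = \mu\, e^{1/(\alpha\mu)} E_1\!\bigl(1/(\alpha\mu)\bigr)$, with $E_1(x) = -\operatorname{Ei}(-x)$, applied directly against the exponential density of $Y$ with mean $e_d$. For $\mathbb{E}[\ln(1 + \gamma_d(X+Y))]$ I would first compute the hypoexponential density of $X+Y$ by direct convolution, obtaining an expression of the form $\bigl(e^{-z/\mu_1} - e^{-z/\mu_2}\bigr)/(\mu_1 - \mu_2)$, and then apply the same integral identity term by term.

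Combining the two expectations yields four $e^{1/(\gamma_d \mu)} E_1(1/(\gamma_d \mu))$ terms; a partial-fractions collapse leaves only two survivors, producing an expression proportional to $e^{1/(\gamma_d \sigma_1^2)} E_1\!\bigl(1/(\gamma_d \sigma_1^2)\bigr) - e^{1/(\gamma_d e_d)} E_1\!\bigl(1/(\gamma_d e_d)\bigr)$ with prefactor $\sigma_1^2/(\sigma_1^2 - e_d)$. Substituting $e_d = \sigma_1^2/(1+\sigma_1^2 \gamma_p L_p)$ from Lemma~3.1 rewrites this prefactor as $1/\bigl(1 - \tfrac{1}{1+\sigma_1^2 \gamma_p L_p}\bigr)$ and the second exponent as $(1+\gamma_p L_p)/\gamma_d$; multiplying by $B(L-L_p)/\ln 2$ and converting $E_1$ back to $\operatorname{Ei}(-\cdot)$ then gives the claimed closed form in \eqref{eqn:ergodic_capacity_exp}.

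The main obstacle will be the partial-fraction collapse: each of the two component expectations produces $E_1$ evaluated at several related arguments, and only careful bookkeeping of signs, together with the identity $(\sigma_1^2 - e_d) + e_d = \sigma_1^2$, makes the unwanted cross terms cancel to leave the clean two-term form. The $E_1 \leftrightarrow \operatorname{Ei}$ sign convention has to be tracked throughout, since \eqref{eqn:ergodic_capacity_exp} is written in terms of $\operatorname{Ei}$ at negative arguments rather than $E_1$ at positive ones.
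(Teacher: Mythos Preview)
Your proposal is correct and reaches the same closed form, but by a genuinely different route than the paper. The paper never splits the logarithm; instead it invokes the Frullani-type representation $\ln(1+x)=\int_0^\infty \frac{e^{-z}}{z}(1-e^{-xz})\,dz$, applies Fubini, changes variables to $z=(\sigma_d^2+|w_{d,wf}|^2\rho_d)s$, and then recognises the inner expectations as moment-generating functions of the exponential laws of $|\hat h_d|^2$ and $|w_{d,wf}|^2$. This produces a single $s$-integral with rational integrand $\frac{e^{-\sigma_d^2 s}}{(1+\rho_d\sigma_1^2 s)(1+\rho_d e_d s)}$, whose partial-fraction decomposition immediately yields the two $\operatorname{Ei}$ terms. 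Your approach, by contrast, uses the algebraic identity $\log(1+\tfrac{a}{b+c})=\log(b+c+a)-\log(b+c)$ to reduce to two one-dimensional log-expectations, and handles the sum $X+Y$ via the hypoexponential density. Both paths are sound; the paper's transform method is slightly more mechanical and would extend to any fading law with a tractable MGF, while yours avoids the auxiliary integral variable but requires the convolution density. One small bookkeeping correction: the combination you describe actually produces three $E_1$ terms (two from the hypoexponential piece, one from $Y$ alone), not four, and the collapse to two survivors is just the coefficient identity $-\tfrac{\mu_2}{\mu_1-\mu_2}-1=-\tfrac{\mu_1}{\mu_1-\mu_2}$ rather than a true partial-fractions step.
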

\begin{proof}
The integral form of $\ln(1+x)$ is used instead, which can be expressed as:
\begin{equation}\label{eqn:ln_expand}
\ln(1+x) = \int_0^{\infty} \frac{e^{-z}}{z} (1-e^{-xz}) {\rm{d}} z.
\end{equation}
By substituting \eqref{eqn:ln_expand} into \eqref{eqn:c}, $C$ in \eqref{eqn:c} can be rewritten as:
\begin{equation}
\begin{aligned}
C = \frac{B(L-L_p)}{\ln2} \mathbb{E} \bigg\{ & \int_0^{\infty} \frac{e^{-z}}{z} \bigg[ 1 - \\
&\exp \bigg( -\frac{|\hat{h}_d|^2 \rho_d}{\sigma_d^2 + |w_{d,wf}|^2 \rho_d}z \bigg) \bigg] {\rm{d}} z \bigg\}.
\end{aligned}
\end{equation}
By employing Fubini's theorem, the order of the expectation operation and the intergration operation can be exchanged. Besides, for simplicity, $z$ is substituted by $(\sigma_d^2 + |w_{d,wf}|^2 \rho_d)s$, then $C$ can be further expressed as:
\begin{equation}\label{eqn:C_s}
\begin{aligned}
C = \frac{B(L-L_p)}{\ln2} \int_0^{\infty} & \frac{e^{-\sigma_d^2 s}}{s} \underbrace{\mathbb{E} \big\{ 1- e^{-|\hat{h}_d|^2 \rho_d s} \big\}}_{\mathcal{I}} \times \\
& \underbrace{\mathbb{E} \big\{ e^{-|w_{d,wf}|^2 \rho_d s} \big\}}_{\mathcal{J}} {\rm{d}} s.
\end{aligned}
\end{equation}

Since $\hat{h}_d \sim \mathcal{CN} (0,\sigma_1^2)$ and $w_{d,wf} \sim \mathcal{CN} (0,e_d)$, their probability density function (PDF) can be expressed as follows, respectively.
\begin{subequations}
\begin{align}
f_{|\hat{h}_d|^2}(x) &= \frac{1}{\sigma_1^2} \exp(-\frac{x}{\sigma_1^2}), \label{eqn:pdf_hd}\\
f_{|w_{d,wf}|^2}(x) &= \frac{1}{e_d} \exp \bigg( -\frac{x}{e_d} \bigg). \label{eqn:pdf_w} 
\end{align}
\end{subequations}
By substituting \eqref{eqn:pdf_hd} and \eqref{eqn:pdf_w} into $\mathcal{I}$ and $\mathcal{J}$ in \eqref{eqn:C_s}, respectively, $\mathcal{I}$ and $\mathcal{J}$ can be derived as follows, respectively.
\begin{subequations}\label{eqn:I+J}
\begin{align}
\mathcal{I} &= \frac{1}{\sigma_1^2} \int_0^{\infty} (1-e^{-\rho_d s x}) e^{-\frac{x}{\sigma_1^2}} {\rm{d}} x = \frac{\rho_d \sigma_1^2 s}{1+\rho_d \sigma_1^2 s},\\
\mathcal{J} &= \frac{1}{e_d} \int_0^{\infty} e^{-\rho_d s x} e^{-\frac{x}{e_d}} {\rm{d}} x = \frac{1}{1+\rho_d e_d s}.
\end{align}
\end{subequations}

Then, by substituting \eqref{eqn:I+J} into \eqref{eqn:C_s}, $C$ can be further derived as:
\begin{equation}\label{eqn:C_derive}
\begin{aligned}
C &= r \rho_d \sigma_1^2 \int_0^{\infty} \frac{e^{-\sigma_d^2 s}}{(1+\rho_d \sigma_1^2 s)(1+\rho_d e_d s)} {\rm{d}} s\\
& \overset{(a)}{=} \frac{r}{\rho_d e_d} \int_0^{\infty} \frac{e^{-\sigma_d^2 s}}{[1/(\rho_d \sigma_1^2)+s][1/(\rho_d e_d) +s]} {\rm{d}} s\\
& \overset{(b)}{=} \frac{r}{1-e_d/\sigma_1^2} \int_0^{\infty} \bigg( \frac{e^{-\sigma_d^2 s}}{1/(\rho_d \sigma_1^2)+s} - \frac{e^{-\sigma_d^2 s}}{1/(\rho_d e_d) +s} \bigg) {\rm{d}} s\\
&= \frac{r}{1-\frac{e_d}{\sigma_1^2}} \bigg[ e^{\frac{1}{\gamma_d e_d}}{\rm{Ei}} \bigg(-\frac{1}{\gamma_d e_d} \bigg) - e^{\frac{1}{\gamma_d \sigma_1^2}} {\rm{Ei}} \bigg(-\frac{1}{\gamma_d \sigma_1^2} \bigg) \bigg]. 
\end{aligned}
\end{equation}
where $r=B(L-L_p)/(\ln2)$, $(a)$ in the second equality can be obtained by extracting $1/(\rho_d^2 e_d)$ from the denominator of the integral term, and $(b)$ in the third equality can be obtained by splitting the denominator of the integral term. Finally, \eqref{eqn:ergodic_capacity_exp} can be obtained by substituting \eqref{eqn:ed_exp} into \eqref{eqn:C_derive}. Then, the proof has been finished.
\end{proof}

As shown in Theorem 3.2, the relationship between $C$ and $L_p$ is not clear. Therefore, the following derivatives of $C$ with respect to $L_p$ are given in \eqref{eqn:C_deri}. Since the first-order derivative can be positive or negative, and the second-order derivative is negative, $C$ has a maximum value. It means that a performance gain can be obtained at first when increasment of the number of pilot provides more precise channel estimation. However, the performance will be worse as the number of pilots increases continually due to the cost of symbols.
\begin{figure*}
\begin{equation}\label{eqn:C_deri}
\nabla_{L_p} ^2 C = -\frac{2B}{\ln2} \int_0^{\infty} \frac{\rho_d^2 \gamma_p e^{-\sigma_d^2 s} s}{(1+\rho_d s)(1+\gamma_p L_p + \rho_d s)^2} {\rm{d}} s - \frac{2(L-L_p)B}{\ln2} \int_0^{\infty} \frac{\rho_d^2 \gamma_p^2 e^{-\sigma_d^2 s} s}{(1+\rho_d s)(1+\rho_p L_p +\rho_d s)} {\rm{d}} s \leq 0. 
\end{equation}
{\noindent} \rule[-10pt]{18cm}{0.05em}
\end{figure*}

\section{Sensing Performance: Ergodic Range CRB.}\label{sec:CRB}
In this section, we investigate the performance of ergodic CRB for measuring the range of User 2, and provide its closed-form expression.

As introduced in \cite{b3}, the expression of CRB for range sensing can be expressed as:
\begin{equation}\label{eqn:CRB_exp}
CRB_d(L_p) = \frac{\alpha}{L_p |h_s(t)|^2},
\end{equation}
where $\alpha = c^2 / (8\pi^2 \gamma_{p,s} s_{rcs} B^2_{rms})$, $\gamma_{p,s}=\rho_p/\sigma_s^2$ is the SNR of echo channels, and $B_{rms}=\sqrt{ \int_{-\infty}^{\infty} F^2 |S(F)|^2 {\rm{d}} F / \int_{-\infty}^{\infty} |S(F)|^2 {\rm{d}} F}$ is the root-mean square bandwidth.

Based on \eqref{eqn:CRB_exp}, the closed-form expression of the expectation of $CRB_d$, i.e., $\bar{\delta}_d = \mathbb{E} \{CRB_d\}$, can be given by the following theorem.
\begin{theorem}
The ergodic CRB for measuring range $\bar{\delta}_d$ can be expressed as:
\begin{equation}\label{eqn:delta_d}
\bar{\delta}_d = \frac{\alpha \sqrt{\pi} \exp \big({-\frac{A_s}{2\sigma_2^2}}\big) \ _1F_1[1/2;1;A_s/(2\sigma_2^2)] }{L_p \sqrt{2\sigma_2^2}},
\end{equation}
where $\ _1F_1(\cdot)$ denotes the confluent hypergeometric function, $A_s$ denotes the signal strength of LOS path, $\sigma_2^2$ is the signal strength of the multipath.

\end{theorem}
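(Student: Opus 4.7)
The plan is to express $\bar\delta_d$ as an integral of $1/|h_s|$ against the Rice density and then evaluate the resulting Gauss--Bessel integral by recognising a Kummer hypergeometric series. Since $h_s$ is a complex Rice coefficient with LOS power $A_s$ and scattered variance $\sigma_2^2$ per real dimension, $R := |h_s|$ has density
\begin{equation*}
f_R(x)=\frac{x}{\sigma_2^2}\exp\!\left(-\frac{x^2+A_s}{2\sigma_2^2}\right)I_0\!\left(\frac{x\sqrt{A_s}}{\sigma_2^2}\right),\quad x\ge 0.
\end{equation*}
Plugging this into the expectation of \eqref{eqn:CRB_exp}, the $x$-prefactor of $f_R$ conveniently cancels the $1/x$ coming from the amplitude dependence in the integrand (which also guarantees integrability at the origin). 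After pulling out the $x$-independent factor $e^{-A_s/(2\sigma_2^2)}$, one is left with
\begin{equation*}
\bar\delta_d=\frac{\alpha\,e^{-A_s/(2\sigma_2^2)}}{L_p\,\sigma_2^2}\int_0^{\infty}e^{-x^2/(2\sigma_2^2)}I_0\!\left(\frac{x\sqrt{A_s}}{\sigma_2^2}\right)dx.
\end{equation*}

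For the remaining Gauss--Bessel integral I would insert the series $I_0(y)=\sum_{k\ge 0}(y/2)^{2k}/(k!)^2$, swap sum and integral (justified by absolute convergence), and compute each Gaussian moment via $\int_0^\infty x^{2k}e^{-x^2/(2\sigma_2^2)}dx=\tfrac{1}{2}(2\sigma_2^2)^{k+1/2}\Gamma(k+\tfrac{1}{2})$. Substituting the identities $\Gamma(k+\tfrac{1}{2})=\sqrt{\pi}\,(1/2)_k$ and $(1)_k=k!$ collapses the bookkeeping into $\tfrac{1}{2}\sqrt{\pi}\sqrt{2\sigma_2^2}\sum_{k\ge 0}\frac{(1/2)_k}{(1)_k\,k!}\big(A_s/(2\sigma_2^2)\big)^k$, which by definition equals $\tfrac{1}{2}\sqrt{\pi}\sqrt{2\sigma_2^2}\cdot{}_1F_1(1/2;1;A_s/(2\sigma_2^2))$. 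Combining with the $1/\sigma_2^2$ prefactor puts $\sqrt{\pi}/\sqrt{2\sigma_2^2}$ out front and yields \eqref{eqn:delta_d} directly.

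The only nontrivial step is evaluating the Gauss--Bessel integral; a direct appeal to a standard table (e.g.\ Gradshteyn \& Ryzhik, \S6.631) would bypass the term-by-term series manipulation but obscure the natural emergence of ${}_1F_1(1/2;1;\cdot)$. A cheap sanity check worth performing is the Rayleigh limit $A_s\to 0$: since ${}_1F_1(1/2;1;0)=1$, the formula collapses to $\alpha\sqrt{\pi}/(L_p\sqrt{2\sigma_2^2})$, which is the familiar expression for $\mathbb{E}\{1/R\}$ when $R$ is Rayleigh with parameter $\sigma_2$.
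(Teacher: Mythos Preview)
Your proposal is correct and follows essentially the same route as the paper's own proof: write the Rice amplitude density, cancel the $x$-prefactor against the $1/x$ in the integrand, expand $I_0$ as a power series, evaluate the resulting Gaussian moments via $\Gamma(k+\tfrac12)$, and identify the remaining series as ${}_1F_1(1/2;1;\cdot)$. Your explicit use of the Pochhammer identities $\Gamma(k+\tfrac12)=\sqrt{\pi}\,(1/2)_k$ and $(1)_k=k!$, together with the Rayleigh-limit sanity check, are nice touches that the paper omits, but the underlying argument is the same.
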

\begin{proof}

Recalling that the wireless link between the BS and User 2 follows Rice distribution. It's PDF is expressed as:
\begin{equation}\label{eqn:pdf_rice}
f_{|h_s(t)|^2}(x) = \frac{x}{\sigma_2^2} \exp(-\frac{x^2+A_s}{2\sigma_2^2}) I_0 \bigg( \frac{\sqrt{A_s} x}{\sigma_2^2} \bigg),
\end{equation}
where $I_0(\cdot)$ is the zero-order Bessel function. Then, $\bar{\delta}_d = \mathbb{E}\{CRB_d\}$ can be derived as follows by substituting \eqref{eqn:pdf_rice} into \eqref{eqn:CRB_exp} and expanding $I_0(\cdot)$.
\begin{equation}\label{eqn:delta_d_deri1}
\bar{\delta}_d = \frac{\alpha e^{-\frac{A_s}{2\sigma_2^2}}}{L_p \sigma_2^2}  \sum_{k=0}^{\infty} \frac{1}{k! \Gamma(k+1)} \underbrace{\int_{0}^{\infty} e^{\frac{-x^2}{2\sigma_2^2}} \bigg( \frac{A_s x^2}{4\sigma_2^4} \bigg)^k {\rm{d}} x}_{\mathcal{K}},
\end{equation}
where $\Gamma(\cdot)$ is the Gamma function. Next, by substituting $x$ by $1/\sqrt{2\sigma_2^2} z$, $\mathcal{K}$ in \eqref{eqn:delta_d_deri1} can be further derived as:
\begin{equation}\label{eqn:delta_d_deri2}
\begin{aligned}
\mathcal{K} &= (2\sigma_2^2)^{k+\frac{1}{2}} \bigg( \frac{A_s}{4\sigma_2^4} \bigg)^k \int_{0}^{\infty} e^{-\frac{x^2}{2\sigma_2^2}} \bigg( \frac{z}{\sqrt{2\sigma_2^2}} \bigg)^{2k} {\rm{d}} \bigg(\frac{z}{\sqrt{2\sigma_2^2}} \bigg) \\
&= \frac{1}{2} (2\sigma_2^2)^{k+\frac{1}{2}} \bigg( \frac{A_s}{4\sigma_2^4} \bigg)^k \Gamma \bigg({k+\frac{1}{2}}\bigg).
\end{aligned}
\end{equation}
Finally, by substituting \eqref{eqn:delta_d_deri2} into \eqref{eqn:delta_d_deri1}, $\bar{\delta}_d$ in \eqref{eqn:delta_d} can be obtained. Then, the proof of Theorem 3.2 has been finished.
\end{proof}

\section{New metrics to characterize ISAC performance: Efficiency and utility.}\label{sec:Joint_capacity_CRB}

Recalling the ergodic capacity $C$ given by Theorem 3.2 and the ergodic CRB of range sensing shown in Theorem 4.1, they are both concerned with the number of pilots $L_p$. To make sense how $L_p$ balances the trade-off between the communication performance and the sensing performance, the details about that will be discussed in this section.

\subsection{Efficiency of ISAC}
In particular, a novel performace metric named the efficiency of ISAC is proposed, which aims to determine the optimal $L_p$ that can strike the balance between the capacity and the sensing error. It is defined as the ratio of capacity to CRB and can be written as:
\begin{equation}\label{eqn: Efficiency}
E_{{\rm{ISAC}}}^{\gamma}(L_p) = \frac{C^{\gamma}(L_p)}{\kappa + CRB_d^{\gamma}(L_p)},
\end{equation}
where $\kappa$ is a constant that can limit the maximum value of the efficiency, $\gamma$ denotes the SNR. $E_{{\rm{ISAC}}}^{\gamma}(L_p)$  evaluates the achievable capacity per unit of the sensing error. It indicates that the efficiency can be improved as $E_{{\rm{ISAC}}}^{\gamma}(L_p)$ becomes larger. Therefore, to reach the best efficiency, an optimization problem is formulated as
\begin{equation}
\mathcal{P}_1 : \max_{L_p} E_{{\rm{ISAC}}}^{\gamma} (L_p),~s.t.~1 \leq L_p \leq L-1.
\end{equation}

As $\mathcal{P}_1$ is a non-linear fractional problem, it can be solved by transforming the original problem to the following equivalent linear problem \cite{b4}
\begin{equation}
\mathcal{P}_2 : \max_{L_p} Q_2 = C^{\gamma}(L_p^*) - q^{*} (\kappa + CRB_d^{\gamma}(L_p^*) ),
\end{equation}
where $L_p^*$ is the optimal solution of $\mathcal{P}_1$, and $q^*$ denotes the maximum value of $\mathcal{P}_1$. By denoting $\bar \delta_d$ given by \eqref{eqn:delta_d} as $\beta(\gamma) L_p^{-1}$, the second-order derivative of $Q_2$ can be written as:
\begin{equation}\label{eqn:P4_deri}
\nabla_{L_p}^2 Q_2 = \nabla_{L_p}^2 C^{\gamma}(L_p) - 2 q^* \beta(\gamma) L_p^{-3} \leq 0,
\end{equation}
where $\nabla_{L_p}^2 $ has already been given by \eqref{eqn:C_deri}.

According to \eqref{eqn:P4_deri}, $Q_2$ is a concave function and its optimal solution can be determined by solving $\nabla_{L_p} Q_2=0$. By using Algorithm 1 provided in \cite{b4}, the optimal solution of original problem $\mathcal{P}_1$ can be obtained, which is expressed as:
\begin{equation}\label{eqn:Lp_opt}
L_p^{opt} = \left\{
\begin{aligned}
&1,~1 \geq L_p^*, \\
&L_p^*,~ 1 < L_p^* < L-1, \\
&L-1,~L_p^* \geq L-1.
\end{aligned}
\right.
\end{equation}
\subsection{Utility of ISAC}
Moreover, a novel concept named utility of ISAC is proposed, which is designed to evaluate the utilization degree of joint performance for communication and sensing. It contains two terms, the first term named capacity utility depicts the ratio of capacity with $L_p$ pilots to the maximum achievable capacity, and the second term named sensing utility depicts the ratio of CRB with $L_p$ pilots to the minimum achievable CRB. These two terms are integrated by an adjustable weighted factor, which is adaptable for specific services. Hence, the definition of utility can be given as:
\begin{equation}\label{eqn:utility}
U_{{\rm{ISAC}}}^{\gamma} = \eta \frac{C^{\gamma}(L_p)}{C_{\max}^{\gamma}} + (1-\eta) \frac{CRB_{g,\min}^{\gamma}}{CRB_g^{\gamma}(L_p)},
\end{equation}
where $C_{\max}^{\gamma}$ denotes the maximum capacity, $CRB_{g,\min}^{\gamma}$ denotes the minimum estimation error for sensing, and $\eta \in (0,1)$ is the weighted factor. 

Since the monotonicity of the capacity and CRB terms can be determined when parameters are fixed, the monotonicity of $U_{{\rm{ISAC}}}^{\gamma}$ is decided by $\eta$, which can be divided into the following two cases: $1)$ when $\eta$ is a small value, $U_{{\rm{ISAC}}}^{\gamma}$ is an increasing function as the CRB term plays a major role; $2)$ otherwise, $U_{{\rm{ISAC}}}^{\gamma}$ is a concave function owning a maximum value as the capacity term has more impact.

Therefore, to avoid the allocation of pilots to extremes, the threshold of capacity utility and sensing utility should be considered, which are denoted as $U_{c,th}$ and $U_{d,th}$, respectively. Only $L_p$ that satisfies the requirements is in consideration.

\section{Simulation Results.}
In this section, simulation results are provided to verify the derived results and evaluate the joint performance of communication and sensing. In particular, the total symbols transmitted in a slot is set as $L=14$, the bandwidth is set as $B=200 \ {\rm{MHz}}$, the variance of the link between BS and User 1 is set as $\sigma_1^2 = 2$, the Rician factor of the LOS path is set as $K = A_s/\sigma_2^2 = 3$. Moreover, the radar cross section is $s_{rcs}=100\ m^2$, and $B_{rms} = B/\sqrt{12}$ according to \cite{b5}.

In Fig. 3, the performance of the expection of capacity and range CRB are presented in (a) and (b), respectively. As shown in Fig. 3 (a), owing to the accuracy improvement of channel estimation, the capacity performance can be improved when $L_p$ begins to increase. However, total capacity decreases as $L_p$ keeps increasing, which results from the cost of resource block. Besides, channel estimation in low SNR case can provide larger performance gain than in high SNR case. In Fig. 3 (b), $\gamma_p=10$ dB is considered. As shown in this figure, the increasement of $L_p$ provides a more precise range estimation result. Moreover, all analytical results are matched perfectly with the Monte-Carlo results, which demonstrates the accuracy of our derived results.

\begin{figure}[htbp]\centering
\subfigure[Ergodic capacity.]{\centering 
\includegraphics[width=0.23\textwidth]{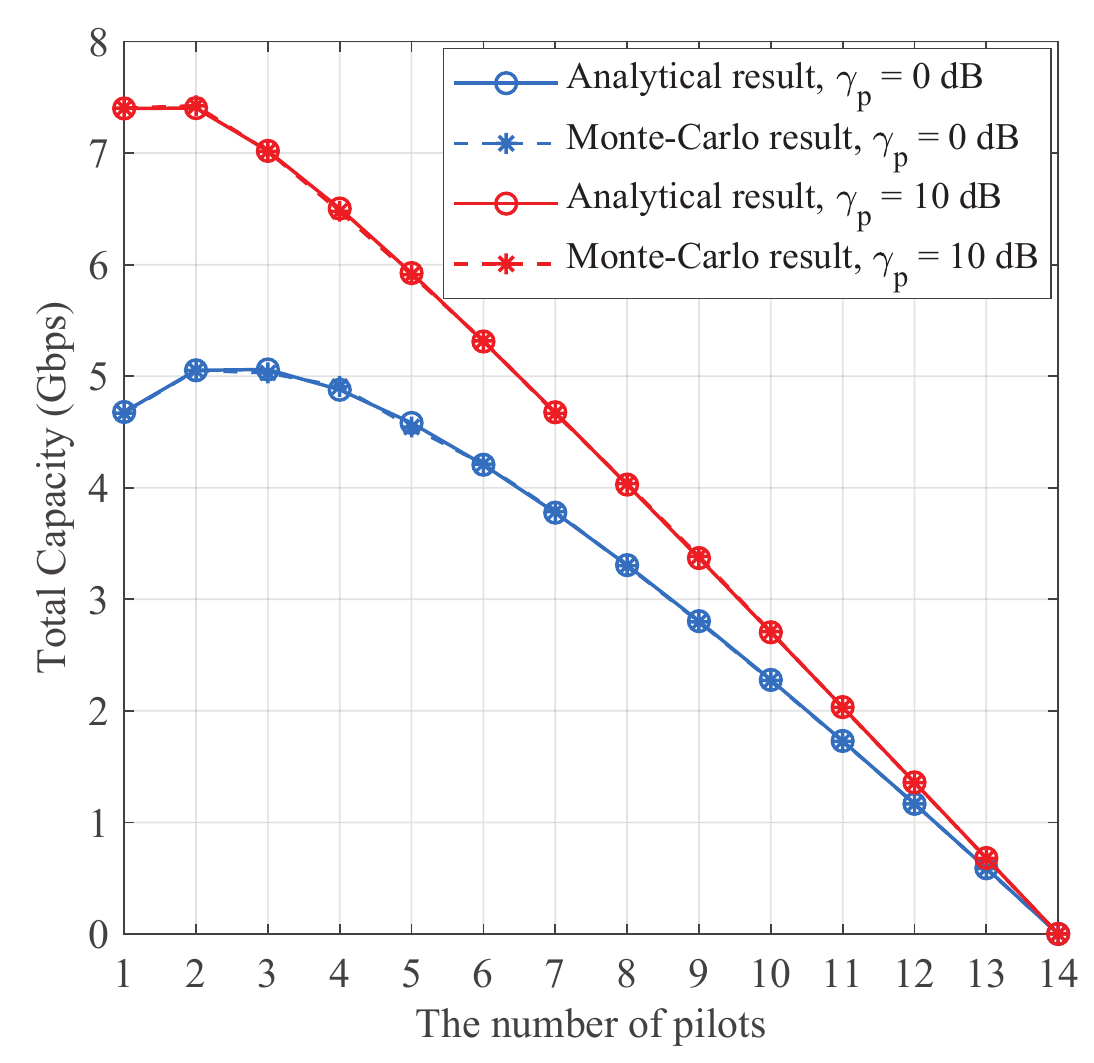}}
\subfigure[Ergodic range CRB.]{\centering 
\includegraphics[width=0.23\textwidth]{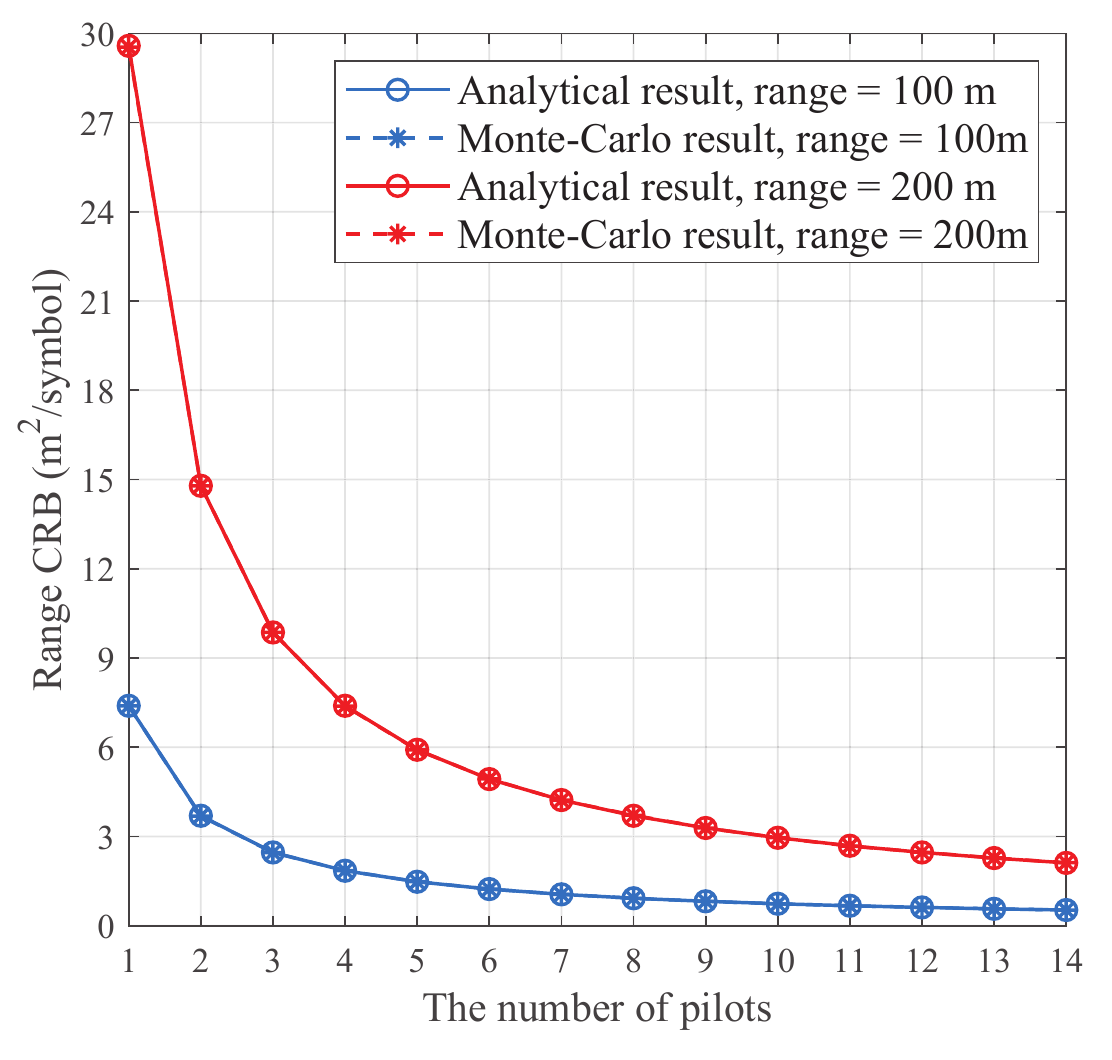}}
\caption{Performance of ergodic capacity and range CRB with respect to the number of pilots. }
\end{figure}

In Fig. 4, the trade-off between the capacity and range CRB with respect to the number of pilots is investigated. In particular, the measured range is set as $100$ m. As shown in this figure, three regions can be divided as follows. $1)$ the upper right region: this region with high capacity but low sensing accuracy; $2)$ the bottom left region: this region with high sensing accuracy but low capacity; $3)$ the bottom right region: this region with high capacity and high sensing accuracy. Therefore, to achieve a balance between communication and sensing, the bottom right region is recommended to be considered more in the ISAC design.

\begin{figure}
	\centering
	\includegraphics[width=0.6\linewidth,height=0.23\textwidth]{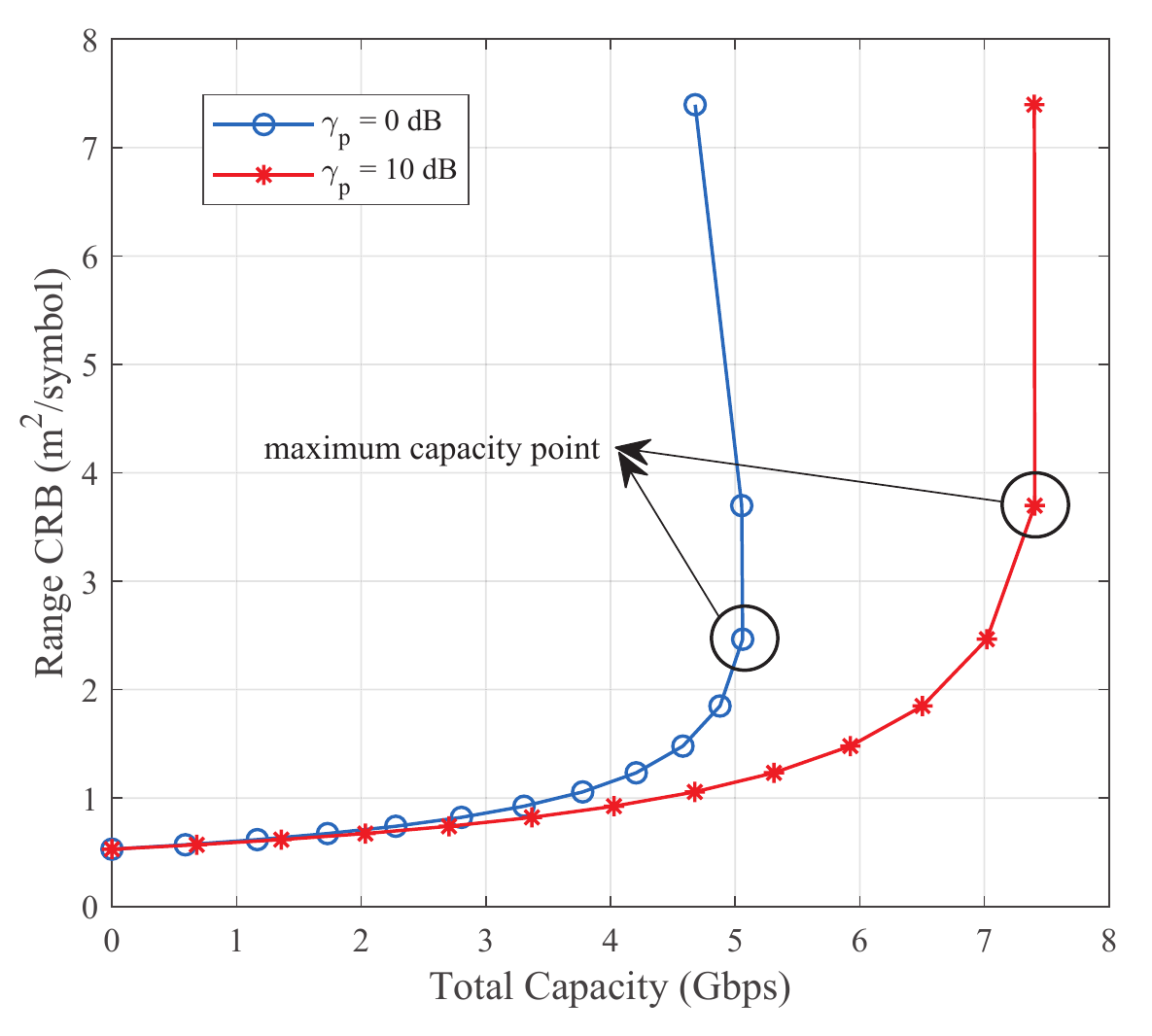}
	\caption{Performance trade-off between communications and range sensing with respect to the number of pilots.}
\end{figure}

In Fig. 5, the efficiency of ISAC proposed in \eqref{eqn: Efficiency} with $\kappa=1$ is evaluated. As shown in Fig. 5 (a), the efficiency with respect to $L_p$ is depicted. It can be observed that the peak of efficiency is also related to the SNR settings. Therefore, we provide the efficiency with respect to SNR under different $L_p$ settings in Fig. 5 (b). In particular, the optimal $L_p$ is obtained by \eqref{eqn:Lp_opt}, and the cases that $L_p = 2$, $L_p=7$ and $L_p=12$ is selected to be compared, which correspond to the upper right region, bottom right region and bottom left region, respectively. As shown in this figure, the case with optimal $L_p$ always achieves the best efficiency. In addtion, it is implied that a certain amount of pilots is needed to maintain the balance between capacity and range CRB in medium and low SNR scenario. However, the decrement of $L_p$ would lead to a higher efficiency in high SNR scenario, as the accuracy of range measurement is precise enough.

\begin{figure}[htbp]\centering
\subfigure[Efficiency with $L_p$.]{\centering 
\includegraphics[width=0.235\textwidth]{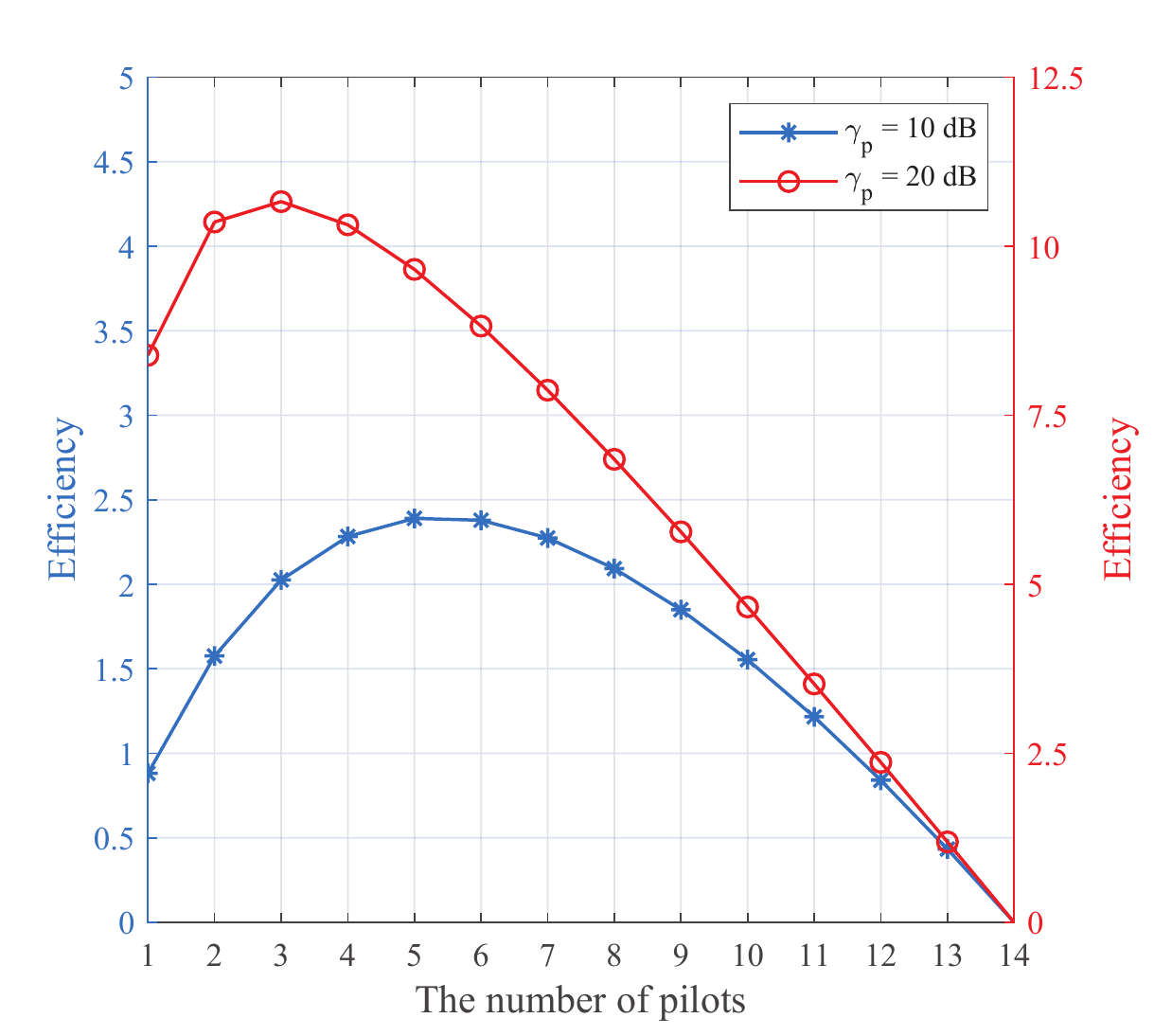}}
\subfigure[Efficiency with SNR.]{\centering 
\includegraphics[width=0.235\textwidth]{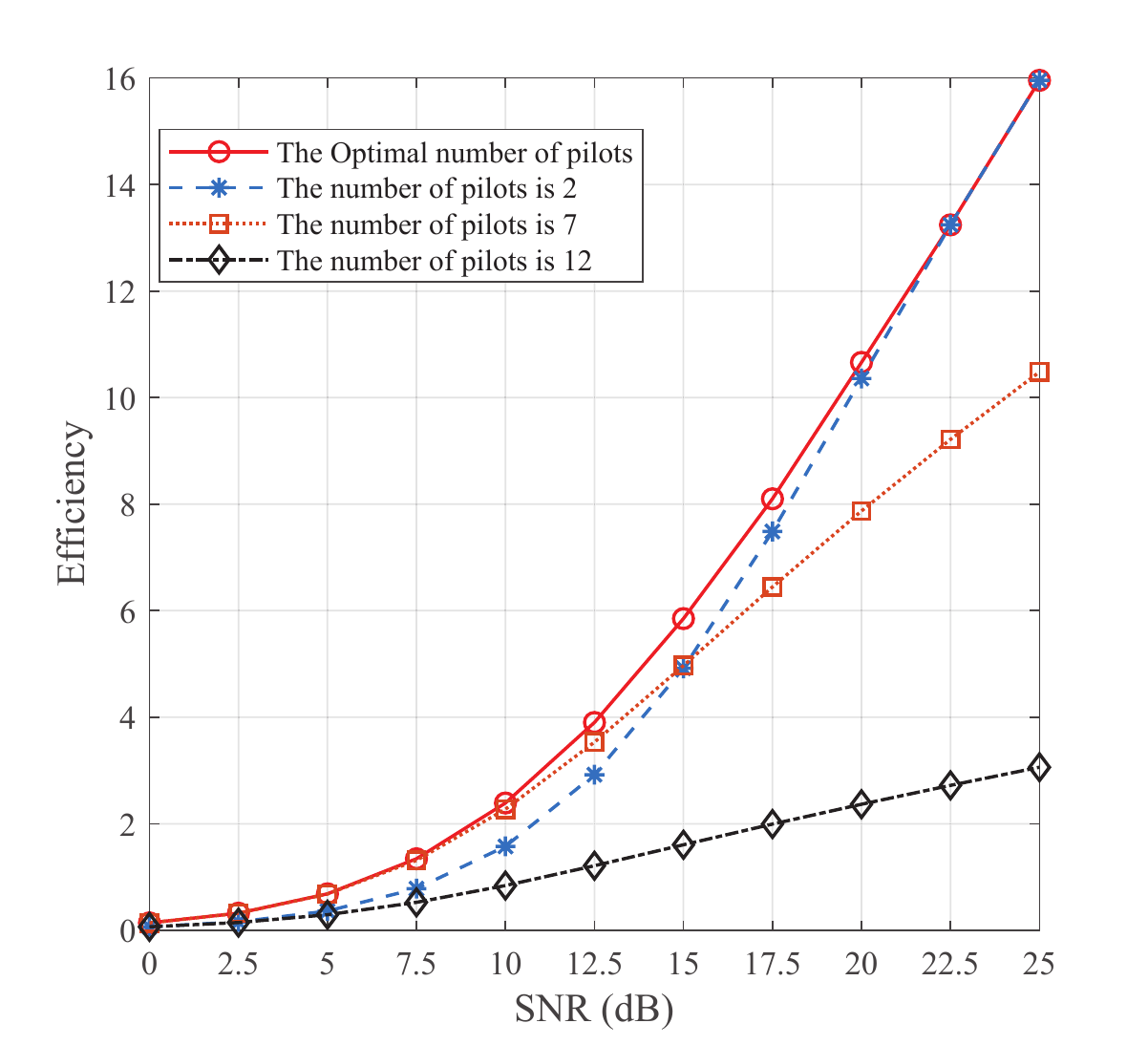}}
\caption{Efficiency of ISAC at range $=$ 100 m.}
\end{figure}

In Fig. 6, the utility of ISAC proposed in \eqref{eqn:utility} is evaluated. In particular, $\gamma_p = 10$ dB, $U_{c,th} = U_{d,th} = 0.2$ and three different settings of $\eta$ are considered. Note that $L_p \leq 2$ and $L_p \geq 12$ are omitted as the utility threshold can not be guaranteed. As shown in this figure, utility increases with $L_p$ increasing when $\eta=0.4$, due to the dominance of range sensing utility. Besides, a peak appearing in the utility curve at $\eta=0.5$ means that there exists an optimal $L_p$ to be selected to achieve a better balance between communications and sensing. Moreover, the decrement of utility with $L_p$ increasing at $\eta=0.6$ means that capacity utility dominates.

\begin{figure}
	\centering
	\includegraphics[width=0.6\linewidth, height=0.2\textwidth]{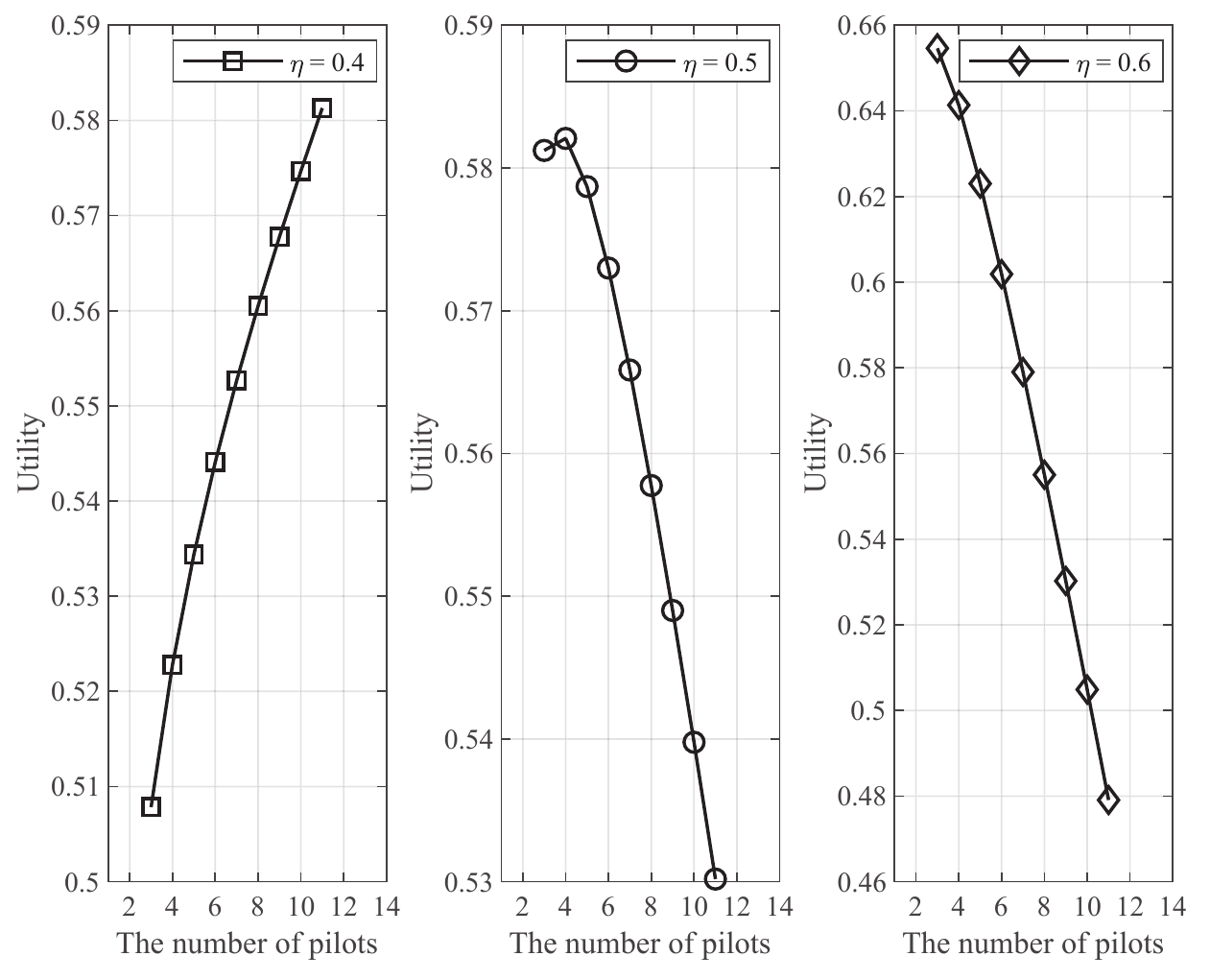}
	\caption{Utility of ISAC at range $=$ 100 m.}
\end{figure}

\section{Conclusions.}
In this paper, the trade-off between the communication and sensing performance in terms of the capacity and range CRB has been investigated. By modeling the communication link and the sensing link as the Rayleigh channel and Rice channel, the closed-form expressions of ergodic capacity with imperfect channel estimation and ergodic range CRB are provided, respectively. Based on the derived results, the trade-off between capacity and range CRB with respect to the number of pilots is clear. Then two metrics named efficiency and utility are proposed to evaluate and optimize the joint performance of ISAC. Finally, simulation results are provided to verify the accuracy of our analysis and show the performance gain by designing the number of pilots.

\bibliographystyle{ACM-Reference-Format}
\bibliography{main.bib}

\end{document}